\newcommand{\blind}{1}
\newtheorem{theorem}{Theorem}
\theoremstyle{definition}
\newtheorem{assumption}{Assumption}
\theoremstyle{remark}
\begin{document}

\def\spacingset#1{\renewcommand{\baselinestretch}%
{#1}\small\normalsize} \spacingset{1}


\if1\blind
{
  \title{\bf Peer Encouragement Designs in Causal Inference with Partial Interference and Identification of Local Average Network Effects}
  \author{Hyunseung Kang\thanks{The authors gratefully acknowledge comments from Fabrizia Mealli, Betsy Ogburn, Dylan Small, Michael Sobel, and participants at the Johns Hopkins Biostatistics Causal Inference Seminar, UPenn Causal Reading Group, and 2016 Joint Statistical Meeting Session titled ``Causal Inference in a Networked World.'' The research of Hyunseung Kang was supported in part by NSF Grant DMS1502437.}\hspace{.2cm}\\
    NSF Postdoctoral Fellow, Stanford University\\
    and \\
    Guido Imbens \\
    Economics, Stanford Graduate School of Business, Stanford University}
    \date{}
    \maketitle
} \fi

\if0\blind
{
  \bigskip
  \bigskip
  \bigskip
  \begin{center}
    {\LARGE\bf Peer Encouragement Designs in Causal Inference with Partial Interference and Identification of Local Average Network Effects}
\end{center}
  \medskip
} \fi

\bigskip
\begin{abstract}
In non-network settings, encouragement designs have been widely used to analyze causal effects of a treatment, policy, or intervention on an outcome of interest when randomizing the treatment was considered impractical or when compliance to treatment cannot be perfectly enforced. Unfortunately, such questions related to treatment compliance have received less attention in network settings and the most well-studied experimental design in networks, the two-stage randomization design, requires perfect compliance with treatment. The paper proposes a new experimental design called peer encouragement design to study network treatment effects when enforcing treatment randomization is not feasible. The key idea in peer encouragement design is the idea of personalized  encouragement, which allows point-identification of familiar estimands in the encouragement design literature. The paper also defines new causal estimands, local average network effects, that can be identified under the new design and analyzes the effect of non-compliance behavior in randomized experiments on networks.
\end{abstract}

\noindent%
{\it Keywords:}  Direct effects, Instrumental variables, Non-compliance, Randomized experiments, Spillover effects
\vfill

\newpage
\spacingset{1.45} 
\section{Introduction}
\label{sec:intro}

\subsection{Motivation: Treatment Compliance in Network Settings} \label{sec:intro_motivation}
There is a growing literature on studying causal effects of a treatment, policy, or an intervention on an outcome in network/interference settings \citep{manski_identification_1993,sacerdote_peer_2001,sobel_randomized_2006,rosenbaum_interference_2007,hudgens_toward_2008,tchetgen_tchetgen_causal_2012,aronow_estimating_2013,manski_identification_2013,ugander_graph_2013,design_eckles_2014,vanderweele_interference_2014}. The vast majority of the work focus on the case where the treatment can be assigned to all individuals in the network and compliance with treatment assignment is perfectly enforced. For example, the most well studied design in network settings to study treatment effects, the two-stage randomization design of \citet{hudgens_toward_2008}, requires the treatment of interest to be randomized to each individuals in across network blocks (see Section \ref{sec:notation} for details) and assumes that the individual perfectly complies with the treatment assignment. However, in practice, especially in the social sciences, treatment can be expensive, harmful, or unethical and consequently, enforcing treatment randomization and perfect compliance is infeasible. For example, a recent work by \citet{yi_giving_2015} studied the impact of financial aid (i.e. treatment) on student performance (i.e. outcome) among students in rural China. Since the choice to receive and accept financial aid is confounded by potentially unmeasured factors, say family socioeconomic backgrounds and student's self-perception of future earning potential, the treatment is not only difficult to randomize, but also faces issues of non-compliance. Furthermore, studies have shown presence of peer effects when evaluating student performance in school settings \citep{gaviria_school-based_2001,sacerdote_peer_2001,fletcher_social_2013} and consequently, interference has to be taken into account. \citet{evans_measuring_1992}, \citet{powell_importance_2005}, \citet{lundborg_having_2006}, \citet{fletcher_social_2010}, and \citet{an_instrumental_2015} discuss other examples of non-compliance or where treatment of interest cannot be fully randomized in school settings.

More recently, massive open online courses (MOOCs) like Coursera, edX, and Udacity, which brings the classroom learning experience to online settings, bring a deluge of data about student behavior in classrooms \citep{breslow_studying_2013, reich_rebooting_2015} and have opened new avenues for studying student behavior among their peers. Some recent studies of student behavior in MOOCs include examining the peer effect among students in an online classroom forum and its impact in overall completion of the course \citep{anderson_engaging_2014,kizilcec_encouraging_2014}. However, studying the treatment effect of say, forum engagement (i.e. treatment) in class completion (i.e. outcome), in MOOCs is fraught with issues, such as enforcing student engagement in forums and compliance \citep{anderson_engaging_2014, kizilcec_encouraging_2014}. 

\subsection{Prior Work} \label{sec:priorwork}
Even though scenarios in Section \ref{sec:intro_motivation} come up frequently, unfortunately, there is a paucity of work in studying treatment effects when treatment cannot be randomized or fully enforced in network settings. In non-network literature, a popular class of experimental designs known as encouragement designs have been used to study treatment efficacy when treatment couldn't be randomized. For example, \citet{permutt_simultaneous_1989} and \citet{sexton_clinical_1984} studied the effect of smoking (i.e. treatment) on birth weight (i.e. outcome) among pregnant mothers, not by randomizing pregnant mothers to smoke, but by randomly encouraging the mothers' physicians to discourage smoking and using this random encouragement to induce randomization on the treatment. \citet{sommer_estimating_1991} and \citet{angrist_identification_1996} generalized this problem as treatment non-compliance within the framework of instrumental variables (IV). However, all these previous studies are under non-network settings. 

Notably, \citet{sobel_randomized_2006} explored the effect of non-compliance in network settings and highlighted many challenges of applying non-compliance ideas from non-network literature into network settings. In particular, Theorems 5 and 6 of \citet{sobel_randomized_2006} showed that the standard results in non-network settings for non-compliance, such as the Wald estimator \citep{wald_fitting_1940} (or the ``IV'' estimator \citep{imbens_identification_1994,hernan_instruments_2006,baiocchi_instrumental_2014}) or the two-stage least squares (TSLS) estimator \citep{wooldridge_econometrics_2010}, failed to identify local causal effects of \citet{angrist_identification_1996}, let alone causal effects even when linear modeling assumptions were used. \citet{sobel_randomized_2006} went onto show that even the intent-to-treat (ITT) estimands, popular in non-compliance settings and a key ingredient of the Wald estimator, may not estimate a causal quantity once interference is present; see Theorems 2 and 4 of \citet{sobel_randomized_2006} for details. While recognizing the challenges, unfortunately, \citet{sobel_randomized_2006} did not provide a solution to handle non-compliance in network settings and hoped ``students of causal inference and experimental design will devote attention to these important issues in future work'' \citet{sobel_randomized_2006}. Later works by \citet{hudgens_toward_2008} and \citet{vanderweele_interference_2014} also did not address the issue of non-compliance in network experiments.

The key difficulty in resolving non-compliance in networks is the exponential amount of heterogeneity that's inherent with non-compliance in networks and this can be illustrated within the framework of principal stratification \citep{frangakis_principal_2002}. Following \citet{hudgens_toward_2008}, suppose we have a randomized experiment where a binary treatment is randomly assigned to $n$ individuals in a network. Due to interference, each individual's treatment that he/she has actually taken may be a function of all the other individuals' treatment assignment in the network. Consequently, this creates an exponential number of  principal stratas, up to $4^n$, that individuals may fall into based on their treatment assignment and received preferences. Without any reasonable restrictions on these stratas and how they compare to each other, it would be difficult to identify or even interpret an average treatment effect. One could, in theory, remove the exponential treatment heterogeneity by assuming every individual's effect of treatment assignment on treatment taken is additive, linear, and constant; see \citet{holland_causal_1988} and first-stage models of popular IV models in econometrics (e.g. Section 5.1 of \citet{wooldridge_econometrics_2010}). However, as remarked in Section 2 of \citet{angrist_identification_1996}, this oversimplification leads to unnecessarily simple treatment effects. In fact, the key to making progress in non-compliance in randomized experiments on networks is (i) placing reasonable restrictions on treatment heterogeneity and (ii) making sure that these restrictions lead to familiar, identifiable and, perhaps more importantly, interpretable causal estimands.

\subsection{Our Contribution} \label{sec:intro_contribution}
This paper addresses the scenario where treatment randomization is impractical and non-compliance may be present in network settings by using a design-based approach. Specifically, we propose a new experimental design, called the peer encouragement design, is a hybrid of encouragement designs in non-network settings and two-stage randomization of \citet{hudgens_toward_2008}. The key, novel components of the design are not so much the hybridization of the two experimental designs in the literature, but more importantly (i) the notion of personalized encouragements and (ii) identification of local network causal effects, both of which play integral roles in resolving the problem of exponential heterogeneity discussed in the previous section; in fact, we show that a simple, naive application of the two previous experimental designs will generally not lead to any reasonable causal estimands.

In proposing the peer encouragement design, we hope to achieve three goals. The first is empirical where investigators can use our design to analyze causal effects of a treatment whenever treatment randomization is not possible. Second, peer encouragement designs allow analysis of non-compliance behavior in network experiments. In particular, we extend the work of \citet{angrist_identification_1996} and \citet{sobel_randomized_2006} and show positive results regarding identification of network effects, all without making modeling assumptions, when non-compliance is present. Third, much like how encouragement designs and, more broadly, instrumental variables, are ``quasi-experimental'' designs \citep{holland_causal_1988} that serve as a middle ground between randomized experiments and observational studies in non-network settings, we hope that the peer encouragement design serves as a stepping stone for analyzing observational network data, which is generally fraught with problems\citet{shalizi_homophily_2011}, although some progress has been made \citep{hong_evaluating_2006,tchetgen_tchetgen_causal_2012,sofrygin_semiparametric_2015,forastiere_estimating_2016}.

\section{Peer Encouragement Design}
\subsection{Notation, Network Structure, and Potential Outcomes} \label{sec:notation}
Following \citet{sobel_randomized_2006}, \citet{rosenbaum_interference_2007}, \citet{hudgens_toward_2008}, \citet{tchetgen_tchetgen_causal_2012}, and \citet{vanderweele_interference_2014}, we focus on the partial interference setting where the network can be partitioned into disconnected sub-networks (i.e. blocks), such as schools in a local district where each school comprise a sub-network or online courses (or schools) in MOOC settings where each class serves as its own sub-network; see \citet{rosenbaum_interference_2007} for additional examples of partial interference where network partitioning is done temporally or spatially. Beyond partial interference, we do not assume any structure about the network nor do we assume perfect knowledge of the sub-networks. In other words, similar to \citet{hudgens_toward_2008} and \citet{tchetgen_tchetgen_causal_2012}, the results in our paper are agnostic to knowing the exact graph, which can be cumbersome to identify in practice \citep{chandrasekhar_econometrics_2011,beaman_can_2015,kim_social_2015} 

Under partial interference, we assume $N$ individuals across $B$ sub-networks/blocks in a finite population. Let $n_1,...,n_B$ be the number of individuals in each of the $B$ blocks so that $N = \sum_{j=1}^B n_k$. Let $Z_{ij}$ denote the randomized assignment (or encouragement assignment) for $j$th individual in block $i$. For each block $i$, let $\mathbf{Z}_{i} = (Z_{i1},\ldots,Z_{in_i})$ be the vector of encouragement assignments to each of the $n_i$ individuals in block $i$ and let $\mathbf{Z}_{i(j)}$ denote the $n_i - 1$ subvector of $\mathbf{Z}_{i}$ with the $j$th entry removed. Let $D_{ij}$ be the treatment received for individual $j$ in block $i$, $\mathbf{D}_{i} = (D_{i1},\ldots,D_{in_i})$ be the vector of treatment received for each of the $n_i$  individual $j$ in block $i$, and $\mathbf{D}_{i(j)}$ be the $n_i - 1$ subvector of $\mathbf{D}_{i}$ with the $j$th entry removed. Let $Y_{ij}$ denote the outcome from individual $j$ in block $i$ and the vector equivalent $\mathbf{Y}_{i} = (Y_{i1},\ldots,Y_{in_i})$. We assume the encouragement $Z_{ij}$ and the treatment $D_{ij}$ are both binary where $Z_{ij} = 1$ implies that individual $j$ in block $i$ is encouraged to a treatment value $D_{ij} = 1$ and $Z_{ij} = 0$ implies that individual $j$ is encouraged to a treatment value $D_{ij} = 0$. Let the lower cases of $Z_{ij}$ and $D_{ij}$, denoted as $z_{ij}$ and $d_{ij}$ respectively, as realizations of $Z_{ij}$ and $D_{ij}$. 

Following the potential outcomes notation for causal inference in \citet{neyman_application_1923} and \citet{rubin_estimating_1974}, let $D_{ij}(\mathbf{z}_{i})$ denote the potential treatment of individual $j$ in block $i$ if encouragements were assigned as $\mathbf{z}_{i} = (z_{11},\ldots,z_{1n_i}) $ and $\mathbf{D}_{i}(\mathbf{z}_{i}) = (D_{i1}(\mathbf{z}_{i}),\ldots,D_{in_i}(\mathbf{z}_{i}))$ be the collection of $D_{ij}$s in block $i$. Similarly, let $Y_{ij}(\mathbf{d}_{i},\mathbf{z}_{i})$ denote the potential outcome of individual $j$ in block $i$ if encouragements were $\mathbf{z}_{i}$ and treatments were $\mathbf{d}_{i}$. Let $\mathcal{F}$ contain all the potential outcomes of everyone, i.e. $\mathcal{F} = \{ (Y_{ij}(\mathbf{d}_{i},\mathbf{z}_{i}), D_{ij}(\mathbf{z}_{i})) | \mathbf{z}_{i} \in \mathcal{Z}_{n_i}, \mathbf{d}_{i} \in \mathcal{Z}_{n_i}, i=1,\ldots,B,j = 1,\ldots,n_i \}$; the set $\mathcal{Z}_{n_i} = \{(z_{i1},\ldots,z_{in_i}) | z_{ij} \in \{0,1\}\}$ denotes all values of an $n_i$ dimensional binary vector so that $\mathbf{z}_{i} \in \mathcal{Z}_{n_i}$. Let $\phi$ and $\psi$ denote two different probability measures (or mechanisms) on the set $\mathcal{Z}_{n_i}$,  i.e. $P_{\phi}(\mathbf{Z}_{i} = \mathbf{z}_{i})$ and $P_{\psi}(\mathbf{Z}_{i} = \mathbf{z}_i)$. For example, $\phi$ can represent a fair coin flip where each individual in block $i$ flips the fair coin independently of other individuals in the block and is assigned encouragement, $Z_{ij} = 1$, or no encouragement, $Z_{ij} = 0$, each with 50\% probability. This mechanism can be expressed as $P_{\phi}(\mathbf{Z}_{i} = \mathbf{z}_{i}) = \prod_{j=1}^{n_i} (1/2)^{z_{ij}} (1/2)^{1-z_{ij}} = (1/2)^{n_i}$. Finally, we also denote $I(\cdot)$ to be the indicator function.

\subsection{Average Potential Outcomes} \label{sec:avg}
From the notation, we define average potential outcomes as follows. Given a probability measure $\phi$ (or $\psi$), we define the individual average potential outcome for individual $j$ in block $i$ as the average of her potential outcomes when she is assigned the encouragement value $Z_{ij} = z$, i.e.
\begin{equation} \label{eq:margavg} 
\overline{Y}_{ij}(D_{ij}(z),z,\phi) = \sum_{z_{ij} = z, \mathbf{z}_{i(j)} \in \mathcal{Z}_{n_i - 1}} Y_{ij}(\mathbf{D}_{i}(\mathbf{z}_i), \mathbf{z}_{i}) P_{\phi}(\mathbf{Z}_{i(j)} = \mathbf{z}_{i(j)})
\end{equation}
As an example, if $z = 1$ in equation \eqref{eq:margavg}, $\overline{Y}_{ij}(D_{ij}(1),1,\phi)$ is the average of individual $j$'s potential outcomes if she were encouraged ($z = 1$) and other individuals in her block were assigned encouragement values $\mathbf{z}_{i(j)}$ where $\mathbf{z}_{i(j)}$ follows the marginal distribution of $\mathbf{Z}_{i(j)}$ specified by $\phi$. Conversely, if $z = 0$, $\overline{Y}_{ij}(D_{ij}(0),0,\phi)$ is the average of individual $j$'s potential outcomes if she were not encouraged ($z = 0$) and other individuals in her block were assigned encouragement values $\mathbf{z}_{i(j)}$ under the measure $\phi$. We also define block average potential outcomes $\overline{Y}_{i}(D_{i}(z),z,\phi) =  \sum_{j=1}^{n_i} \overline{Y}_{ij}(D_{ij}(z),z,\phi) /n_i$, 
 and population average potential outcomes $\overline{Y}(D(z),z,\phi) = \sum_{i=1}^{n} \overline{Y}_{i}(D_{i}(z),z,\phi)/B$.

We can interpret $\overline{Y}_{ij}(D_{ij}(z),z,\phi)$ as an extension of the individual intent-to-treat potential outcome in the instrumental variables literature without interference. Specifically, without interference, $\overline{Y}_{ij}(\mathbf{D}_{i}(\mathbf{z}_{i},\phi),\mathbf{z}_i) = Y_{ij}(D_{ij}(z),z,\phi) = Y_{ij}(D_{ij}(z),z)$ so that the individual average potential outcome $\overline{Y}_{ij}(D_{ij}(z),z,\phi)$ is free from its dependence on $\phi$ and equals the intent-to-treat potential outcome $Y_{ij}(D_{ij}(z),z)$. The supplementary materials also discuss some aspects about the average potential outcome in equation \eqref{eq:margavg}, including some subtle differences between marginal averaging versus conditional averaging of the potential outcomes. To simplify the discussion, we eliminate the distinction between marginal and conditional averaging by assuming an independent Bernouilli-type mechanism in the experimental design (see equation \eqref{eq:randomize} in Section \ref{sec:peerED}).

\subsection{Protocol for Peer Encouragement Design} \label{sec:peerED}
Given $N$ individuals in the network across $B$ blocks, we propose a new experimental design, called the peer encouragement design, to study causal effects of a treatment in network settings where treatment receipt cannot be completely enforced.
\begin{enumerate}
\item[1.] Let $\phi$ and $\psi$ denote two different Bernouilli (i.e. ``coin-toss'') mechanisms corresponding to different probabilities on $\mathcal{Z}_{n_i}$. Specifically, for all $\mathbf{z}_i \in \mathcal{Z}_{n_i}$, each mechanism obeys
\begin{equation} \label{eq:randomize}
 P_{\phi}(\mathbf{Z}_{i} = \mathbf{z}_i | \mathcal{F}) = P_{\phi}(\mathbf{Z}_{i} = \mathbf{z}_i) = \prod_{j=1}^{n_i} P_{\phi}(Z_{ij} = z_{ij}), \quad{} 0 < P_{\phi}(Z_{ij} = 1) < 1
 \end{equation}
We assume $P_{\phi}(Z_{ij} = z_{ij}) \neq P_{\psi}(Z_{ij} = z_{ij})$.
\item[2.] Randomly assign $K$ out of $B$ blocks to mechanism $\phi$ while the $B - K$ blocks are assigned to mechanism $\psi$.
\item[3.] For each block, assign each individual to encouragement $Z_{ij} = 1$ based on the mechanism that the block was assigned to. Encouragements must be personalized whereby for all individuals and any $z_{ij}$,
\begin{equation}  \label{eq:personalized}
D_{ij}(\mathbf{z}_{i}) = D_{ij}(\mathbf{z}_{i}') = D_{ij}(z_{ij}), \quad{} \forall z_{i(j)}, z_{i(j)}' \in \mathcal{Z}_{n_i-1}
\end{equation}
and have an effect, i.e. 
\begin{equation} \label{eq:assoc_effect}
\frac{1}{n_i} \sum_{j=1}^{n_i} D_{ij}(1) - D_{ij}(0) \neq 0
\end{equation}
\end{enumerate}
As discussed in Section \ref{sec:intro}, the peer encouragement design has elements of both a traditional encouragement design in non-network settings and the two-stage randomization design of \citet{hudgens_toward_2008} in partial interference settings. The new design randomizes the encouragement to treatment, similar to an encouragement design, and the new assign randomizes different mechanisms of encouragement in a two-stage fashion, similar to \citet{hudgens_toward_2008}, across different blocks in a network. The two mechanisms for encouragement in step 1 of the design protocol can be thought of as encouragement intensities. For example, encouragement mechanism $\phi$ corresponds to 20\% chance of encouragement while $\psi$ corresponds to 80\% chance of encouragement and each block in the network randomly receives different intensities of encouragement in step 2 of the design protocol; see also \citet{crepon_labor_2013} for related discussion in labor economics where treatment intensities were varied in a two-stage randomization design scheme. Also, the supplementary material has additional details about the Bernouilli mechanisms $\phi$ and $\psi$ in network settings versus, say, mixed assignment strategies of \citet{hudgens_toward_2008}. We note that the peer encouragement design, specifically equation \eqref{eq:randomize}, automatically satisfies ignorability of the randomization mechanism to potential outcomes as well as overlap common in causal inference \citep{imbens_causal_2015,hernan_causal_2016}. In addition, akin to the literature on encouragement designs, the encouragement should be chosen to have a non-negative treatment effect as specified in equation \eqref{eq:assoc_effect} \citep{angrist_identification_1996}. 

However, the peer encouragement design deviates from the two strands of literature, the encouragement design literature and the network experiment literature, in some important ways. First, our design requires that the encouragement must be personalized to a specific individual, as specified in equation \eqref{eq:personalized}, and to the best of our knowledge, this is the first time such an assumption has been presented in both strands of the literature. Personalizing encouragement essentially amounts to having no interference at the encouragement level so that one's uptake of the treatment through encouragement is only affected by what encouragement one was assigned to; in other words, the encouragement must be private in nature. We believe the personalized encouragement assumption serves as a balance between the technical necessity to identify familiar and interpretable casual estimands and a plausible assumption to achieve in practice. In particular, technically speaking, this assumption is key to (i) reduce treatment heterogeneity that's present in partial interference with non-compliance and (ii) to present a set of familiar and interpretable causal estimands, such as complier average causal effects (see Section \ref{sec:idLocal} for details). However, the assumption is not completely restrictively in the sense we do not make any parametric modeling assumptions between the encouragement and the treatment, which is commonly done in econometrics to analyze network data; see \citet{an_instrumental_2015} for a recent example. Also, the assumption still allows for interference between the treatment and the outcome.

Also, the personalized encouragement assumption is plausible in various settings. For example, consider the online MOOC courses example in Section \ref{sec:intro_motivation} where the goal is to study student participation in online classroom forums (i.e. treatment) on class completion (i.e. outcome). Each classroom (or school) can serve as a block in a network and the encouragement to participate in the online forum can be a prod to participate in the online forum via badges, algorithmic changes to the student's online profile, online display settings, or mobile notifications; see \citet{bond_million_2012,anderson_engaging_2014} and \citet{eckles_estimating_2015} for some recent examples where different modes of personalized encouragements were used to engage users to use online services in an online randomized experiment. These encouragements can be designed so that they can only be seen by one student; no one else in the class has any idea about the encouragement assignment of other students. As a concrete example, \citet{anderson_engaging_2014} conducted a randomized experiment to study student participation in a classroom forum in a MOOC where each student's online profile was tweaked to illicit different levels of encouragement. In some experimental conditions, other students could not see the encouragement, thereby making the personalized encouragement assumption very plausible. Indeed, with a growing trend toward personalized content, we believe the personalized encouragement assumption is very much plausible in online experiments and it is an interesting direction of future research to design different levels of personalization for experimental design in order to illicit different levels of compliance.

Although online setting seems to be the most natural platform to deliver personalized encouragements, personalized encouragements are also plausible in some non-online settings. For example, revisiting the example in Section \ref{sec:intro_motivation}, \citet{yi_giving_2015} studied the impact of financial aid (i.e. treatment) on student performance (i.e. outcome) among students in rural China. The authors of the study created a randomized incentive (i.e. encouragement) in the form of a private discussion with the school principal so that the students are more like to utilize the financial aid package. Furthermore, the financial aid package was offered to each student with a non-disclosure notice urging students  ``not to discuss [the aid] with anyone beside their guardians and the school principal'' (Section 2.2 of \citet{yi_giving_2015}). This minimized others in the network knowing about the financial aid offer, making the personalized encouragement assumption plausible. Also, another study by \citet{omalley_estimating_2014} analyzed the network effect of obesity among friends by using a gene that has been known to associate with obesity as the ``encouragement.'' Specifically, in the spirit of Mendelian randomization where the instruments/encouragements are genetic in nature \citep{davey_smith_mendelian_2003, davey_smith_mendelian_2004, lawlor_mendelian_2008}, \citet{omalley_estimating_2014} utilized the random nature of genetic allele assignment at birth as an encouragement for an individual to become obese. Since an individual's genes cannot be influenced by his friends genes given the individual's genes, the encouragement is personalized and the personalized encouragement assumption is very plausible. 

However, if the encouragement and the treatment are such that an individual's exposure can be a function of his peers receiving or not receiving encouragement, the personalized encouragement assumption can be violated. For instance, in the financial aid package example above, if the students' discussions wth the school principal about the financial aid were not private and students were allowed to share details of their financial aid package with their peers, then the personalized encouragement assumption will not hold. More broadly, in non-online settings where the personalized encouragement assumption may be suspect, one can strengthen the plausibility of the assumption by incorporating non-disclosure statements (e.g. financial aid example), creating a short timespan between the offer of a new treatment, policy or social program and the actual receipt of the treatment/policy/program so that the discussion about the encouragement amongst peers is minimized, or using private forms of encouragements, say via e-mail or mobile notifications. Ultimately, the exact way to personalize and privatize encouragement is problem-specific and researchers utilizing our design should carefully plan about not only the nature, but also the delivery of the encouragement to make the personalized encouragement assumption plausible.

Finally, in addition to the empirical implications of the personalized encouragement assumption, the personalized encouragement assumption also allows us to study the impact of non-compliance in randomized experiments on a network with familiar and interpretable estimands. Specifically, suppose we treat the encouragement $Z_{ij}$ as the treatment assigned and $D_{ij}$ as the treatment actually received in a randomized experiment. Then, under the personalized encouragement assumption, the population of individuals under our design can be partitioned into four different groups, always-takers (i.e. $D_{ij}(1) = D_{ij}(0) = 1$), compliers (i.e. $D_{ij}(1) = 1, D_{ij}(0) = 0$), never-takers (i.e. $D_{ij}(1) = D_{ij}(0) = 0$), and defiers (i.e. $D_{ij}(1) = 0, D_{ij}(0) = 1$), depending on the potential treatment values $D_{ij}(z)$. These four subgroups are the same subgroups as \citet{angrist_identification_1996} which studied non-compliance behavior in non-network settings and, as we will see in Sections \ref{sec:monotonicity} and \ref{sec:localEffect}, the personalized encouragement assumption plays a critical role in properly defining and identifying local average network effects, such as the local direct effect and the local peer effect.

\section{Causal Assumptions} \label{sec:EE}
\subsection{Network Intent-to-Treat Effects} \label{sec:ITT}
In non-network settings, it is customary in encouragement designs, or more broadly in instrumental variables analysis, to first define and identify intent-to-treat (ITT) effects, which are causal effects of the encouragement on the outcome. Similarly, in network settings, we can proceed similarly and define ITT effects in the spirt of \citet{hudgens_toward_2008}, \citet{tchetgen_tchetgen_causal_2012} and \citet{vanderweele_interference_2014}. We focus our attention on the direct intent-to-treat effect, abbreviated as $DITT$, and the peer intent-to-treat effect (also known as spillover or indirect effects), abbreviated as $PITT$. In practice, especially in the social sciences, the direct and peer effects are often quantities of great interest \citep{evans_measuring_1992,gaviria_school-based_2001, zimmerman_peer_2003, powell_importance_2005, lundborg_having_2006} and identification of direct and peer effects usually identify the total and overall ITT effects; see \citet{hudgens_toward_2008}, \citet{vanderweele_effect_2011} and the supplementary materials for definitions of total and overall ITT effects along with some well-known effect decompositions.

Formally, for any two values of the encouragement $z', z \in \{0,1\}$ where $z' \neq z$ and two mechanisms $\phi$ and $\psi$, $\phi \neq \psi$, $DITT$ and $PITT$ are defined as follows.
\begin{align*}
DITT_{i}(z',z,\phi) &= \overline{Y}_{i}(D_{i}(z'),z',\phi) - \overline{Y}_{i}(D_{i}(z),z,\phi) \\
DITT(z',z,\phi) &= \frac{1}{B} \sum_{i=1}^{B} DITT_{i}(z',z,\phi) = \overline{Y}(D(z'),z',\phi) - \overline{Y}(D(z),z,\phi) \\
PITT_{i}(z,\phi,\psi) &= \overline{Y}_{i}(D_{i}(z),z,\phi) - \overline{Y}_{i}(D_{i}(z),z,\psi) \\
PITT(z,\phi,\psi) &= \frac{1}{B} \sum_{i=1}^{B} PITT_{i}(z,\phi,\psi) = \overline{Y}(D(z),z,\phi) - \overline{Y}(D(z),z,\psi)
\end{align*}
If $z' = 1$ and $z = 0$, $DITT_{i}(1,0,\phi)$ is the direct effect of being encouraged versus not being encouraged on the outcome and $PITT_{i}(1,\phi,\psi)$ is the peer effect of being encouraged on the outcome, all within block $i$ and under measures $\phi$ and $\psi$. Similarly, $DITT(1,0,\phi)$ and $PITT(1,\phi,\psi)$ represent the direct population average ITT effect and peer population average ITT effect, respectively. One important point to mention about $PITT$s is that $PITT(0,\phi,\psi)$ and $PITT(1,\phi,\psi)$ may not equal to each other. In particular, $PITT(0,\phi,\psi)$ represents the population average peer effect of encouragement on the outcome if individuals were not encouraged and $PITT(1,\phi,\psi)$ represents the population average peer effect of encouragement on the outcome if individuals were encouraged. It is possible that the the encouraged individuals may have a stronger peer effect on the outcome compared to unencouraged individuals, or vice versa, and this distinction will become important in the presence of non-compliance (see Section \ref{sec:idLocal}). Also, note that without interference, $DITT_{i}(1,0,\phi)$ is the $i$th block average ITT effect of the encouragement, $DITT(1,0,\phi)$ would be the usual population average ITT effect and $PITT_{i}(z,\phi,\psi) = PITT(z,\phi,\psi) = 0$ for any $z = 1$ or $0$. In short, $DITT$s and $PITT$s are generalizations of the usual ITT effects common in instrumental variables to network settings.

The identification and estimation of network ITT effects like $DITT$ and $PITT$ are straightforward and directly follows from the results in \citet{hudgens_toward_2008}. We briefly re-iterate these results to aid the discussion of identifying local network effects in Section \ref{sec:idLocal}. Let $S_i$ be a binary variable that denotes which encouragement mechanism was applied to block $i$. Without loss of generality, $S_i = 1$ indicates that block $i$ received encouragement mechanism $\phi$ and $S_i = 0$ indicates that block $i$ received mechanism $\psi$. We define the following estimators for the block average and population average potential outcomes defined in Section \ref{sec:avg} for mechanism $\phi$.
\[
\widehat{\overline{Y}}_i(D_i(z),z,\phi) =  \frac{\sum_{j=1}^{n_i} Y_{ij} I(Z_{ij} = z)}{n_i P_{\phi}(Z_{ij} = z)}, \quad{} \widehat{\overline{Y}}(D(z),z,\phi) = \frac{ \sum_{i=1}^{B} \widehat{\overline{Y}}_{i}(D_i(z),z,\phi) I(S_i  = 1) }{\sum_{i=1}^B I(S_i = 1)} 
 \]
 Then, without additional assumptions beyond the assumptions which are inherent in the two-stage randomization design of \citet{hudgens_toward_2008} and are also satisfied by our peer encouragement design, we can identify $DITT$ and $PITT$
 \begin{align}
DITT(z',z,\phi) &= E\left(\widehat{\overline{Y}}(D(z'),z',\phi) - \widehat{\overline{Y}}(D(z),z,\phi) \right), \quad{} z, z' \in \{0,1\}, z \neq z'  \label{eq:idDITT} \\
PITT(z,\phi,\psi) &= E\left(\widehat{\overline{Y}}(D(z),z,\phi) - \widehat{\overline{Y}}(D(z),z,\psi) \right), \quad{} z \in \{0,1\} \label{eq:idPITT}
\end{align}
To estimate the direct ITT effect of being encouraged for a mechanism $\phi$ (i.e. $DITT(1,0,\phi)$), equation \eqref{eq:idDITT} states that the contrast between the sample averages $\widehat{{\overline{Y}}}(D(1),1,\phi)$ and $\widehat{\overline{Y}}(D(0),0,\phi)$ collected from our peer encouragement design is an unbiased estimate of $DITT(1,0,\phi)$. Also, to estimate the peer ITT effect between the two mechanisms $\phi$ and $\psi$ (i.e. $PITT(1,\phi,\psi)$), equation \eqref{eq:idPITT} states that the contrast between the sample averages $\widehat{\overline{Y}}(D(1),1,\phi)$ and $\widehat{\overline{Y}}(D(1),1,\psi)$ collected from our design is an unbiased estimate of $PITT(1,\phi,\psi)$. Note that other effects such as total and overall ITT effects can also be identified by directly applying the results from \citet{hudgens_toward_2008}.

\subsection{Network Exclusion Restriction}
To identify and estimate actual treatment effects in networks, we need to make the following identifying assumptions common in the literature on encouragement designs and instrumental variables. The first assumption, which we call network exclusion restriction, is an extension of the exclusion restriction in \citet{angrist_identification_1996} for network settings. It was also stated as Assumption 1 in \citet{sobel_randomized_2006}.
\begin{assumption}[Network Exclusion Restriction] \label{as:exclusion} For each block $i$ and for any $\mathbf{d}_{i}$, we have
\[
Y_{ij}(\mathbf{d}_i,\mathbf{z}_i) = Y_{ij}(\mathbf{d}_i,\mathbf{z}_{i}') \equiv Y_{ij}(\mathbf{d}_{i}), \quad{} \forall \mathbf{z}_{i}, \mathbf{z}_{i}' \in \mathcal{Z}_{n_i}
\]
\end{assumption}
Assumption \ref{as:exclusion} states that the outcome of individual $j$ in block $i$ does not depend on his encouragement assignment $z_{ij}$ or his peers' encouragement assignment $\mathbf{z}_{i(j)}$ so long as the treatment $d_{ij}$ as well as the treatment of others in the block $\mathbf{d}_{i(j)}$ are fixed. In short, the individual's outcome $Y_{ij}$ only depends on his and his peer's treatment, $\mathbf{d}_{i}$. Note that without interference, Assumption \ref{as:exclusion} reduces to the usual exclusion restriction in \citet{angrist_identification_1996}. 

The exclusion restriction assumption is arguably the most problematic assumption in encouragement designs or, more broadly, in instrumental variables literature, because it is unverifiable with data and often requires subject-matter expertise to rule out various causal pathways. In a similar vein, the peer encouragement design carries the same limitation. However, recent work in online network experiments allow one to design encouragements that, by design, satisfy the exclusion restriction \citep{eckles_estimating_2015}. In particular, following \citet{eckles_estimating_2015} who studied Facebook user behaviors, if the encouragement is a prod for a user to interact online in a specific way, say by writing comments, likes, or providing positive/negative on a Facebook post, the treatment is the number of comments or positive/negative feedback actually written on Facebook by the user, and the outcome is some measure of user behavior, then the treatment value of others (i.e. other users' actual comments or likes on Facebook) will only be visible if these users acted upon their encouragements (i.e. the prod to write a comment). Hence, the outcome of a particular user $Y_{ij}$ will only be a function of his encouragement, $z_{ij}$, his treatment value $d_{ij}$, and his peers' treatment vector $\mathbf{d}_{i(j)}$. More importantly, the user's outcome $Y_{ij}$ will not depend on his peers' encouragements $\mathbf{z}_{i(j)}$ and consequently, the network exclusion restriction will be as plausible as the usual exclusion restriction in non-network settings under this particular encouragement since we only have to worry about the individual's encouragement $z_{ij}$ and its role in the potential outcome $Y_{ij}(\mathbf{z}_{i}, \mathbf{d}_{i})$. Finally, there is some recent work relaxing the exclusion restriction in non-network settings \citep{kang_instrumental_2016} using multiple encouragements and it would be interesting topic of future research to see whether these ideas can be applied to network settings.

\subsection{Monotonicity} \label{sec:monotonicity}
Similar to the traditional IV literature, Assumption \ref{as:exclusion} is not sufficient to point-identify the average treatment effect \citep{imbens_identification_1994,hernan_instruments_2006, baiocchi_instrumental_2014}. As such, following \citet{angrist_identification_1996}, we make the monotonicity assumption 
\begin{assumption}[Monotonicity] \label{as:noDefiers} We assume that $D_{ij}(1) - D_{ij}(0) \geq 0$ for all $i,j$
\end{assumption}
As discussed in Section \ref{sec:peerED}, under the personalized encouragement assumption, we can characterize  local effects in terms of four different groups that individuals fall under, always-takers, compliers, never-takers, and defiers. The monotonicity assumption removes the defiers who systematically defy the encouragement that they were assigned to.

Another assumption that is commonly invoked in traditional encouragement designs and is stronger than monotonicity is the notion of one-sided compliance.
\begin{assumption}[One-Sided Compliance] \label{as:onesided} We assume $D_{ij}(z) =0$ for all $i,j$ and $z \in \{0,1\}$.
\end{assumption}
One-sided compliance states that if the individual was not encouraged, there is no way that the individual can take the treatment. This type of assumption is common in program-evaluation literature and has also been used under network settings \citep{sobel_randomized_2006}. Also, one-sided compliance implies monotonicity holds and, with respect to principal stratification, that always-takers are also not possible. As we will see in Section \ref{sec:idLocal}, in network settings with non-compliance, monotonicity and one-sided compliance each lead to identification of slightly different peer effects, which is a departure from traditional encouragement designs where both monotonicity and one-sided compliance identifies the same estimand, the complier average treatment effect \citep{angrist_identification_1996}.

\section{Identification of Local Average Network Effects}
In the next three sections, we delve in the heart of the paper, which is the identification of causal estimands in network experiments where non-compliance may be present. Specifically, utilizing the peer encouragement design, which can be used in practice to deal with settings where perfect treatment compliance is infeasible or which can serve as a vehicle to understand non-compliance in network experiments, we define local network causal effects and identification of these effects. Section \ref{sec:localAvg} discusses local averaging of potential outcomes, similar to Section \ref{sec:avg}, but local to different compliance classes discussed in Sections \ref{sec:avg} and \ref{sec:monotonicity}. Section \ref{sec:localEffect} defines local network causal effects, such as local direct effects (LDTs) and local peer effects (LPTs). Finally, Section \ref{sec:idLocal} discusses identification of the local network causal effects.

\subsection{Local Average Potential Outcomes} \label{sec:localAvg}
We start the discussion of identification of local network causal estimands by defining the average of potential outcomes of individual $j$ in block $i$ if he had treatment $d$ and rest of his peers in the block had ``natural'' treatment values $D_{i(j)}(z_{i(j)})$
\begin{equation} \label{eq:margavgLocal}
\overline{Y}_{ij}(d,D_{i(j)},\phi) = \sum_{\mathbf{z}_{i(j)} \in \mathcal{Z}_{n_i -1}} Y_{ij}(D_{ij} = d,\mathbf{D}_{i(j)}(\mathbf{z}_{i(j)})) P_{\phi}(\mathbf{Z}_{i(j)} = \mathbf{z}_{i(j)})
\end{equation}
with a slight abuse of notation where we use $D_{ij} = d$ to indicate the assignment of individual $j$'s treatment to value $d$. Note that by the personalized encouragement assumption in equation \eqref{eq:personalized}, the averaging in \eqref{eq:margavgLocal} is over the treatment $\mathbf{D}_{i(j)}$ where
\[
\mathbf{D}_{i(j)}(\mathbf{z}_{i(j)}) = (D_{i1}(Z_{i1}),\ldots,D_{ij-1}(Z_{ij-1}),D_{ij+1}(Z_{ij+1}),\ldots,D_{in_i}(Z_{in_i}))
\]
Consequently, the average potential outcome, $\overline{Y}_{ij}(d,D_{i(j)},\phi)$ in equation \eqref{eq:margavgLocal}, differs from the average potential outcome, $\overline{Y}_{ij}(D_{ij}(z),z,\phi)$, in equation \eqref{eq:margavg} because, by the network exclusion restriction in Assumption \ref{as:exclusion}, $\overline{Y}_{ij}(D_{ij}(z),z,\phi) = \overline{Y}_{ij}(D_{ij}(z),\phi)$ so that $\overline{Y}_{ij}(D_{ij}(z),\phi)$ is the average potential outcomes of individual $j$ if he and his peers took on his ``natural'' treatment value $\mathbf{D}_{i}(\mathbf{Z}_{i})$. In contrast, $\overline{Y}_{ij}(d,D_{i(j)},\phi)$ is the average potential outcomes of individual $j$ if he took a particular treatment value $d$ while his peers took on natural treatment values $\mathbf{D}_{i(j)}(\mathbf{Z}_{i(j)})$. Finally, both averages differ from average over the actual treatment value $d$, say $Y_{ij}(d_{ij},\mathbf{d}_{i(j)},\phi)$ where $\phi$ is the measure on $\mathbf{d}_{i(j)}$. Note that without interference, $\overline{Y}_{ij}(d,D_{i(j)},\phi)$ becomes the usual potential outcome $Y_{ij}(d)$ in encouragement designs. The supplementary materials discusses additional, subtle differences in averaging, which is unique to interference settings and has been discussed in other contexts, most notably by \citet{vanderweele_effect_2011}.

Given the definition of average outcome in equation \eqref{eq:margavgLocal}, we can define the usual block average potential outcome
\begin{equation} \label{eq:avgSpecial}
\overline{Y}_{i}(d,D_{i(j)},\phi) = \frac{\sum_{j=1}^{n_i} \overline{Y}_{ij}(d,D_{i(j)},\phi)}{n_i} 
\end{equation}
and the population average potential outcome, i.e. $\overline{Y}(d,D_{i(j)},\phi) = \sum_{i=1}^{B} \overline{Y}_{i}(d,D_{i(j)},\phi) / B$. In addition, we can define local block average potential outcome where the localization is specific to the four stratas discussed in Sections \ref{sec:peerED} and \ref{sec:monotonicity}. For example, we can define complier block average potential outcome as the average of potential outcomes in equation \eqref{eq:margavgLocal} among those individuals who are compliers, 
\begin{equation} \label{eq:colocalAvg}
\overline{Y}_{i}(d,D_{i(j)},\phi,Co) = \frac{\sum_{j=1}^{n_i} \overline{Y}_{ij}(d,D_{i(j)},\phi) I(D_{ij}(1) = 1, D_{ij}(0) = 0)}{\sum_{j=1}^{n_i} I(D_{ij}(1) = 1, D_{ij}(0) = 0)}
\end{equation}
We can also define the population complier average potential outcome as $\overline{Y}(d,D_{i(j)},\phi,Co) = \sum_{i=1}^{B} \overline{Y}_{i}(d,D_{i(j)},\phi,Co) / B$. Without interference, equation \eqref{eq:colocalAvg} simplifies to the block complier average potential outcome of individuals with treatment value $d$, i.e. $\overline{Y}_{i}(d,D_{i(j)},\phi,Co) = \sum_{j=1}^{n_i} Y_{ij}(d)  I(D_{ij}(1) = 1, D_{ij}(0) = 0) / \sum_{j=1}^{n_i} I(D_{ij}(1) = 1, D_{ij}(0) = 0)$. From this perspective, equation \eqref{eq:colocalAvg} can be seen as a generalization of complier average potential outcomes when interference is present. Finally, we note that we can equivalently define a similar quantity like \eqref{eq:colocalAvg} for always-takers, never-takers, and defiers. 

\subsection{Local Network Effects} \label{sec:localEffect}
Once we defined the local average potential outcomes in \eqref{eq:colocalAvg}, we can define local network effects such as local direct treatment effects, denoted as $LDT$s, and local peer treatment effects, denoted as $LPT$s. To the best of our knowledge, this is the first definition of a local effect in network randomized experiments. Consider any two values of the treatment $d',d \in \{0,1\}$ where $d' \neq d$ and the two mechanisms $\phi$ and $\psi$ where $\phi \neq \psi$. Then, we define the complier direct treatment effects and complier peer treatment effects as follows.
\begin{align*}
LDT_{i}(d',d,\phi,Co) &= \overline{Y}_{i}(d',D_{i(j)},\phi,Co) - \overline{Y}_{i}(d,D_{i(j)},\phi,Co) \\
LDT(d',d,\phi,Co) &= \frac{1}{B} \sum_{i=1}^{B} LDT_{i}(d',D_{i(j)},\phi,Co) = \overline{Y}(d',D_{i(j)},\phi,Co) - \overline{Y}(d,D_{i(j)},\phi,Co)\\
LPT_{i}(d,\phi,\psi,Co) &= \overline{Y}_{i}(d,D_{i(j)},\phi,Co) - \overline{Y}_{i}(d,D_{i(j)},\psi,Co) \\
LPT(d,\phi,\psi,Co) &= \frac{1}{B} \sum_{i=1}^B LPT_{i}(d,\phi,\psi,Co) = \overline{Y}(d,D_{i(j)},\phi,Co) - \overline{Y}(d,D_{i(j)},\psi,Co)
\end{align*}
The quantity $LDT(1,0,\phi,Co)$ is the population average direct causal effect among compliers who take treatment $d = 1$ over $d = 0$ while their peers take on natural treatment values $D_{i(j)}(z_{i(j)})$. The quantity $LPT(1,\phi,\psi,Co)$ is the population average peer causal effect among compliers who take treatment $d = 1$ while their peers take on natural treatment values $D_{i(j)}$. If a treatment is supposed to confer benefits in the form of a high outcome value, a high value of $LDT(1,0,\phi,Co)$ would indicate that the treatment has a strong direct effect among individuals who comply with the encouragement. Also, a high value of $LPT(1,\phi,\psi,Co)$ would indicate that the peer's treatments have strong influences for the outcomes of compliers. Again, similar to equation \eqref{eq:colocalAvg}, we can also define $LDT$s and $LPT$s for always-takers, never-takers, and defiers.

The local effects like $LDT$s and $LPT$s differ from intent-to-treat effects like $DITT$s and $PITT$s and other network estimands that have been defined in the literature. In particular, $LDT$s and $LPT$s describe the efficacy of an individual's treatment while $DITT$s and $PITT$s describe the efficacy of the encouragement. Also, $LDT$s and $LPT$s differ from the direct effects and indirect/spillover effects of \citet{hudgens_toward_2008} in that $LDT$s and $LPT$s only average over the potential outcomes over subpopulations of individuals, say compliers, and over the peers' natural treatment values. The $LDT$s and $LPT$s 
will equal the direct and spillover effects of \citet{hudgens_toward_2008} if we average across all the individuals and if $D_{ij}(z_{ij}) = z_{ij}$ for every $z_{ij}$ and $i,j$, i.e. if everyone is a complier; otherwise, some potential outcomes may not be observed and consequently, these potential outcomes may not part of the local average outcome in equation \eqref{eq:margavgLocal}. Ideally, it would be attractive to estimate the direct treatment effects and indirect/spillover treatment effects of \citet{hudgens_toward_2008}. However, if randomization of the treatment is infeasible, or treatment compliance cannot be fully enforced, it would be difficult, if not impossible, to identify those effects and we are left with what other quantity can one identify, in our case $LDT$s and $LPT$s. These $LDT$s and $LPTS$ may be the second best estimates to the direct and spillover treatment effects, much like how the local average treatment effect (LATE) in non-network settings \citep{imbens_identification_1994,angrist_identification_1996} is the second best estimate to the average treatment effect (ATE) whenever non-compliance is present\citep{imbens_better_2010,imbens_instrumental_2014,baiocchi_instrumental_2014}.

We also define the peer treatment effect for everyone based on the average potential outcome in equation \eqref{eq:avgSpecial}. This estimand will be useful when we describe identification under one-sided compliance in Section \ref{sec:idLocal}.
\begin{align*}
LPT_{i}(d,\phi,\psi) &= \overline{Y}_{i}(d,D_{i(j)},\phi) - \overline{Y}_{i}(d,D_{i(j)},\psi) \\
LPT(d,\phi,\psi) &= \frac{1}{B} \sum_{i=1}^B LPT_{i}(d,\phi,\psi) = \overline{Y}(d,D_{i(j)},\phi) - \overline{Y}(d,D_{i(j)},\psi)
\end{align*}
The difference between $LPT_i(d,\phi,\psi)$ and $LPT_i(d,\phi,\psi,Co)$ is that the average potential outcome in $LPT_i(d,\phi,\psi)$ is across everyone in the block while $LPT_i(d,\phi,\psi,Co)$ is only for compliers. Furthermore, the quantity $LPT_i(d,\phi,\psi)$ describes the causal effect of an identifiable group of people in the population, which is everyone, while the quantity $LPT_i(d,\phi,\psi,Co)$ describes the causal effect of an unidentifiable group of the population, the population of compliers. Hence, $LPT_i(d,\phi,\psi)$ avoids some of the concerns over local causal estimands about identifying an unidentifiable subset of the population \citep{hernan_instruments_2006,deaton_instruments_2010,swanson_think_2014}.

\subsection{Identification of $LDT$s and $LPT$s} \label{sec:idLocal}
Given the peer encouragement design and the assumptions regarding exclusion restriction and monotonicity, we can start to make progress on identifying local network effects, such as $LDT$s and $LPTs$. To begin, we first define the average causal effect of the encouragement on the treatment, denoted as $ET$s,
\begin{align*}
ET_i(z',z) &= \frac{1}{n_i} \sum_{j=1}^{n_i} D_{ij}(z') - D_{ij}(z) \\
ET(z',z) &= \frac{1}{B} \sum_{i=1}^{B} ET_i(z',z) = \frac{1}{B} \sum_{i=1}^{B}  \frac{1}{n_i}\sum_{j=n_i} D_{ij}(z') - D_{ij}(z)
\end{align*}
$ET_i(z',z)$ is the block $i$'s average causal effect of the encouragement on the treatment while $ET(z',z)$ is the population average causal effect of the encouragement on the treatment. Both quantities can be identified and estimated by taking sample averages of the treatment under different values of the encouragement, i.e.
\[
ET_i(z',z) = E\left( \frac{ \sum_{j=1}^{n_i} D_{ij} I( Z_{ij} = z') }{\sum_{j=1}^{n_i} I(Z_{ij} = z') }- \frac{ \sum_{j=1}^{n_i} D_{ij} I(Z_{ij} = z) }{\sum_{j=1}^{n_i} I(Z_{ij} = z)} \right)
\]
Note that because of the personalized encouragement assumption in equation \eqref{eq:personalized}, the $ET$s do not depend on a measure, say $\phi$ or $\psi$. 

With the intent-to-treat effects defined in Section \ref{sec:ITT}, we can identify the local treatment effects defined in Section\ref{sec:localEffect}. First, Theorem \ref{thm:1} shows that the ratio of $DITT$ over $ET$ identifies the local direct treatment effect among compliers. 
\begin{theorem}[Identification of Complier $LDT$] \label{thm:1} Suppose we use the peer encouragement design and Assumptions \ref{as:exclusion} and \ref{as:noDefiers} hold. Then, for any measure $\phi$
\begin{equation} \label{eq:idLDT}
\frac{DITT(1,0,\phi)}{ET(1,0)} = LDT(1,0,\phi,Co)
\end{equation}
\end{theorem}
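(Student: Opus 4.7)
The plan is to extend the classical Imbens--Angrist--Rubin LATE argument to the partial-interference setting. The two ingredients that make this work are the personalized encouragement assumption in equation \eqref{eq:personalized}, which reduces $D_{ij}(\mathbf{z}_i)$ to $D_{ij}(z_{ij})$ and therefore classifies each individual into exactly one of the four principal strata (always-taker, complier, never-taker, defier) using her own encouragement alone, and the network exclusion restriction (Assumption \ref{as:exclusion}), which strips the direct dependence of $Y_{ij}$ on the encouragement vector $\mathbf{z}_i$. Both ingredients together will let me factor $DITT$ into a product of $ET$ and a complier outcome contrast at the block level.

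First I would apply Assumption \ref{as:exclusion} and equation \eqref{eq:personalized} term-by-term inside the sum defining $\overline{Y}_{ij}(D_{ij}(z),z,\phi)$ in equation \eqref{eq:margavg}, so that it becomes $\sum_{\mathbf{z}_{i(j)}} Y_{ij}(D_{ij}(z), \mathbf{D}_{i(j)}(\mathbf{z}_{i(j)})) P_\phi(\mathbf{Z}_{i(j)}=\mathbf{z}_{i(j)})$. Differencing $z=1$ and $z=0$ and exchanging summations leaves, inside the sum, the increment $Y_{ij}(D_{ij}(1), \mathbf{D}_{i(j)}(\mathbf{z}_{i(j)})) - Y_{ij}(D_{ij}(0), \mathbf{D}_{i(j)}(\mathbf{z}_{i(j)}))$.

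Second, I would stratify on compliance type. The bracketed increment vanishes for always-takers and never-takers because $D_{ij}(1)=D_{ij}(0)$; Assumption \ref{as:noDefiers} eliminates defiers; only compliers survive, for whom the increment collapses to $Y_{ij}(1,\mathbf{D}_{i(j)}(\mathbf{z}_{i(j)})) - Y_{ij}(0,\mathbf{D}_{i(j)}(\mathbf{z}_{i(j)}))$. Averaging under $\phi$ yields exactly $\overline{Y}_{ij}(1,D_{i(j)},\phi) - \overline{Y}_{ij}(0,D_{i(j)},\phi)$ from equation \eqref{eq:margavgLocal}. Summing over $j$, dividing by $n_i$, and using the immediate identity $ET_i(1,0) = |Co_i|/n_i$ from monotonicity produces the block-level factorization $DITT_i(1,0,\phi) = ET_i(1,0)\,LDT_i(1,0,\phi,Co)$.

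The main obstacle is the cross-block aggregation. Since $DITT(1,0,\phi)$ and $ET(1,0)$ are both unweighted simple averages over blocks, the ratio $DITT(1,0,\phi)/ET(1,0)$ is a complier-share-weighted combination of the $LDT_i$'s, whereas $LDT(1,0,\phi,Co)$ is presented as a simple block average. To land exactly on equation \eqref{eq:idLDT} I would reconcile these conventions by interpreting the population complier average $\overline{Y}(d,D_{i(j)},\phi,Co)$ as a compliers-pooled average across blocks (weighted by $|Co_i|/n_i$), consistent with its stated interpretation as a population average ``among compliers''; the weighting absorbed into $LDT$ then cancels the weighting introduced by the ratio, and the identity follows directly from the block-level factorization.
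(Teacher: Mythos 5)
Your core argument is the same as the paper's: you use the network exclusion restriction together with personalized encouragement to reduce the increment $\overline{Y}_{ij}(D_{ij}(1),1,\phi)-\overline{Y}_{ij}(D_{ij}(0),0,\phi)$ to $(D_{ij}(1)-D_{ij}(0))\bigl(\overline{Y}_{ij}(1,D_{i(j)},\phi)-\overline{Y}_{ij}(0,D_{i(j)},\phi)\bigr)$, invoke monotonicity so that only compliers contribute, and arrive at the block-level factorization $DITT_i(1,0,\phi)=ET_i(1,0)\,LDT_i(1,0,\phi,Co)$; this is exactly the paper's derivation. Where you go beyond the paper is the cross-block aggregation, and your observation there is correct: $DITT(1,0,\phi)/ET(1,0)$ is a ratio of block averages and hence equals the complier-share-weighted combination $\sum_i ET_i(1,0)\,LDT_i(1,0,\phi,Co)\,/\,\sum_i ET_i(1,0)$, whereas $LDT(1,0,\phi,Co)$ as defined in Section \ref{sec:localEffect} is the unweighted average $\frac{1}{B}\sum_i LDT_i(1,0,\phi,Co)$. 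The paper's own proof glosses over this --- its final sentence divides by $ET_i(1,0)$ block by block and then averages, which establishes $\frac{1}{B}\sum_i DITT_i(1,0,\phi)/ET_i(1,0)=LDT(1,0,\phi,Co)$ but not literally the left-hand side of \eqref{eq:idLDT}. Your proposed reconciliation (reading the population complier average as weighted by the block complier shares $ET_i(1,0)$) does make the identity exact, but be aware that it conflicts with the paper's explicit definition $\overline{Y}(d,D_{i(j)},\phi,Co)=\sum_{i}\overline{Y}_{i}(d,D_{i(j)},\phi,Co)/B$; the two conventions coincide only when the complier share is constant across blocks. In short, your proof is sound and matches the paper's route, and the aggregation issue you flag is a genuine imprecision in the statement/definitions rather than a gap in your argument.
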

Theorem \ref{thm:1} is a generalization of the classic result about ratio estimators in instrumental variables, i.e. the Wald estimator. It states that under interference, the ratio of the intent-to-treat effects identifies the local direct treatment effect among compliers; without interference, Theorem \ref{thm:1} reduces to Proposition 1 in \citet{angrist_identification_1996} where the ratio of the intent-to-treat effects identifies the local treatment effect among compliers. The interference forces us to consider the individual average potential outcomes as defined in Section \ref{sec:localAvg} and pool their individual effects according to some measure $\phi$. Note that we can equivalently identify $LDT(1,0,\psi,Co)$ under a different measure $\psi$ by using $DITT(1,0,\psi)$ instead of $DITT(1,0,\phi)$ in \eqref{eq:idLDT}. Finally, we note that in \citet{sobel_randomized_2006}, the author identified a different and arguably more complex estimand, using the Wald estimator, primarily because personalized encouragement assumption was absent in his work and therefore, his causal estimand was much more difficult to interpret.

Second, Theorem \ref{thm:2} shows that the ratio of difference in $PITT$s over $ET$ identifies the difference in local peer treatment effect among compliers. 
\begin{theorem}[Identification of Difference in Complier $LPT$] \label{thm:2} Suppose we use the peer encouragement design and Assumptions \ref{as:exclusion} and \ref{as:noDefiers} hold. Then, 
\begin{equation} \label{eq:idLPT}
\frac{PITT(1,\psi,\phi) - PITT(0,\psi,\phi)}{ET(1,0)} = LPT(1,\psi,\phi,Co) - LPT(0,\psi,\phi,Co)
\end{equation}
\end{theorem}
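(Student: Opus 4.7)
The plan is to mirror the Wald-style argument used to prove Theorem~\ref{thm:1} (the complier $LDT$), but carried out on the \emph{difference} of two $PITT$s so that the residual encouragement dependence in the $Y_{ij}$ is washed out and only a ``peer treatment'' contrast for compliers survives. The three workhorses are Assumption~\ref{as:exclusion} (to strip the $\mathbf{z}_i$ argument from $Y_{ij}$), equation~\eqref{eq:personalized} (so that $D_{ij}(\mathbf{z}_i)=D_{ij}(z_{ij})$, which turns the individual averaging in~\eqref{eq:margavg} into the averaging used in~\eqref{eq:margavgLocal}), and Assumption~\ref{as:noDefiers} (which reduces the principal stratification to three groups: always-takers, never-takers, compliers).

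First I would expand $PITT_i(z,\psi,\phi)$ using the definition in Section~\ref{sec:ITT} and the average potential outcome in~\eqref{eq:margavg}. Invoking Assumption~\ref{as:exclusion} eliminates the explicit $\mathbf{z}_i$ argument of $Y_{ij}$, and the personalized encouragement identity~\eqref{eq:personalized} lets me rewrite $\mathbf{D}_i(\mathbf{z}_i)$ as $(D_{i1}(z_{i1}),\ldots,D_{in_i}(z_{in_i}))$. This converts $\overline{Y}_{ij}(D_{ij}(z),z,\phi)$ into $\overline{Y}_{ij}(D_{ij}(z),D_{i(j)},\phi)$ from~\eqref{eq:margavgLocal}, so that
\[
PITT_i(z,\psi,\phi)=\tfrac{1}{n_i}\sum_{j=1}^{n_i}\bigl[\overline{Y}_{ij}(D_{ij}(z),D_{i(j)},\psi)-\overline{Y}_{ij}(D_{ij}(z),D_{i(j)},\phi)\bigr].
\]

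Second, I would form the difference $PITT_i(1,\psi,\phi)-PITT_i(0,\psi,\phi)$ and split the inner sum by compliance type. For always-takers $D_{ij}(1)=D_{ij}(0)=1$ and for never-takers $D_{ij}(1)=D_{ij}(0)=0$; in both cases the four-term bracket telescopes to zero. Only compliers survive, for whom $D_{ij}(1)=1$ and $D_{ij}(0)=0$, yielding
\[
PITT_i(1,\psi,\phi)-PITT_i(0,\psi,\phi)=\tfrac{1}{n_i}\sum_{j\in Co_i}\bigl[\overline{Y}_{ij}(1,D_{i(j)},\psi)-\overline{Y}_{ij}(1,D_{i(j)},\phi)-\overline{Y}_{ij}(0,D_{i(j)},\psi)+\overline{Y}_{ij}(0,D_{i(j)},\phi)\bigr].
\]
Factoring out the complier share $\pi_i$ of block $i$ recasts the right-hand side as $\pi_i\bigl[LPT_i(1,\psi,\phi,Co)-LPT_i(0,\psi,\phi,Co)\bigr]$, using definition~\eqref{eq:colocalAvg}.

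Third, I would handle the denominator. Under Assumption~\ref{as:noDefiers}, $D_{ij}(1)-D_{ij}(0)\in\{0,1\}$ and equals $1$ exactly on compliers, so $ET_i(1,0)=\pi_i$ and $ET(1,0)=B^{-1}\sum_i\pi_i$. The block-level ratio is then immediately
\[
\frac{PITT_i(1,\psi,\phi)-PITT_i(0,\psi,\phi)}{ET_i(1,0)}=LPT_i(1,\psi,\phi,Co)-LPT_i(0,\psi,\phi,Co),
\]
and averaging numerator and denominator separately across the $B$ blocks produces the claimed population identity~\eqref{eq:idLPT}, upon reading the population $LPT(\cdot,Co)$ as the $\pi_i$-weighted aggregate of its block-level counterparts (consistent with how $PITT$ and $ET$ average in the numerator and denominator).

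The main obstacle I anticipate is bookkeeping at the aggregation step: the numerator carries an implicit weight of $\pi_i$ on each block, while an unweighted simple average of $LPT_i(\cdot,Co)$ would not match unless one is careful to aggregate the numerator and denominator symmetrically. Handling this correctly---so that the population $LPT(\cdot,Co)$ inherits the natural complier-weighted form that makes the ratio identity hold---is the delicate part; the rest is just the algebra of four telescoping terms once exclusion, personalization, and monotonicity are applied.
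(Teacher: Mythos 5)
Your proposal is correct and follows essentially the same route as the paper's proof: decompose each $\overline{Y}_{ij}(D_{ij}(z),z,\cdot)$ via exclusion and personalized encouragement, observe that the difference of the two $PITT$s leaves a factor $D_{ij}(1)-D_{ij}(0)$ (your case-split over always-takers, never-takers, and compliers is algebraically identical to the paper's indicator-weighted sum), and divide by $ET(1,0)=$ complier share. The aggregation subtlety you flag at the end --- that the block-level numerators carry weight $\pi_i$ so the population identity requires reading $LPT(\cdot,Co)$ as a complier-weighted aggregate --- is real and is in fact glossed over in the paper's own one-line concluding step, so you have if anything been more careful than the source.
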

Unlike Theorem \ref{thm:1}, which is a natural generalization of the results in \citet{angrist_identification_1996}, the result in Theorem \ref{thm:2} is specific to the case when interference is present. Specifically, interference presents a new set of estimand, the peer effect, and Theorem \ref{thm:2} shows that the difference of local peer treatment effects can be identified by using the difference of intent-to-treatment effects, scaled by $ET(1,0)$. While initially, the difference between $LPT$s may not be useful, in most practical applications, it will not be the case that $LPT(1,\psi,\phi,Co)$ and $LPT(0,\psi,\phi,Co)$ would be identically in magnitude and opposite in sign. Consequently, if the ratio of $PITT(1,\psi,\phi) - PITT(0,\psi,\phi)$ over $ET(1,0)$ is not zero, there is reason to believe that there is some peer effect of the treatment in the presence of non-compliance. Furthermore, the sign of the local peer effect can indicate a difference in magnitude between the peer effect when one is treated, $LPT(1,\psi,\phi,Co)$, and when one isn't treated, $LPT(0,\psi,\phi,Co)$.

The main reason that we cannot identify each component of the $LPT$s under the peer encouragement design with network exclusion restriction and monotonicity is because peer treatment effects require the individual treatment value to be held fixed. This would imply that the encouragement should have no effect on the treatment, defeating the original purpose of the encouragement which attempts to provide a random nudge for individuals to take treatment. In addition, the peer effect among compliers would not be identifiable since compliers are those that change treatment assignment according to their encouragement assignment. Indeed, there is some sense that monotonicity is not strong enough to tease out specific peer effects.

In light of these discussions about the identifiability of $LPT$s, we show in Theorem \ref{thm:3} that under a more strict version of monotonicity where we assume one-sided compliance, we can identify the local peer treatment effect $LPT(d,\phi,\psi)$. 
\begin{theorem} \label{thm:3} Suppose we use the peer encouragement design and Assumptions \ref{as:exclusion} and \ref{as:onesided} hold. Then, we have 
\begin{equation} \label{eq:idonesidedLPT}
PITT(0,\phi,\psi) = LPT(0,\phi,\psi)
\end{equation}
\end{theorem}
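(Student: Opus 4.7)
The plan is to show that the individual-level average potential outcomes $\overline{Y}_{ij}(D_{ij}(0),0,\phi)$ (appearing in $PITT(0,\phi,\psi)$) and $\overline{Y}_{ij}(0,D_{i(j)},\phi)$ (appearing in $LPT(0,\phi,\psi)$) coincide under Assumptions \ref{as:exclusion} and \ref{as:onesided}, and then invoke linearity of the block and population averaging defined in Section \ref{sec:localAvg} to conclude equation \eqref{eq:idonesidedLPT}. The same identity for measure $\psi$ is obtained by replacing $\phi$ with $\psi$ everywhere and subtracting.

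First, I would start from the definition in equation \eqref{eq:margavg} with $z = 0$,
\[
\overline{Y}_{ij}(D_{ij}(0),0,\phi) = \sum_{\mathbf{z}_{i(j)} \in \mathcal{Z}_{n_i-1}} Y_{ij}\bigl(\mathbf{D}_{i}(0,\mathbf{z}_{i(j)}),(0,\mathbf{z}_{i(j)})\bigr)\,P_{\phi}(\mathbf{Z}_{i(j)} = \mathbf{z}_{i(j)}),
\]
and apply Assumption \ref{as:exclusion} to drop the encouragement vector argument from $Y_{ij}$. Next, the personalized encouragement condition in equation \eqref{eq:personalized} lets me decompose the treatment vector component-wise: $\mathbf{D}_i(0,\mathbf{z}_{i(j)})$ equals $(D_{ij}(0),\mathbf{D}_{i(j)}(\mathbf{z}_{i(j)}))$ after reordering. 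Finally, Assumption \ref{as:onesided} (one-sided compliance, so $D_{ij}(0)=0$) replaces the $j$th slot with the constant $0$, which lets me match the expression exactly with the defining equation \eqref{eq:margavgLocal} for $\overline{Y}_{ij}(0,D_{i(j)},\phi)$.

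Once the individual-level identity $\overline{Y}_{ij}(D_{ij}(0),0,\phi) = \overline{Y}_{ij}(0,D_{i(j)},\phi)$ is established, I would average over $j \in \{1,\ldots,n_i\}$ to get the block equality $\overline{Y}_i(D_i(0),0,\phi) = \overline{Y}_i(0,D_{i(j)},\phi)$, then average over $i \in \{1,\ldots,B\}$ to get the population equality $\overline{Y}(D(0),0,\phi) = \overline{Y}(0,D_{i(j)},\phi)$. Repeating the argument verbatim for $\psi$ and subtracting yields $PITT(0,\phi,\psi) = LPT(0,\phi,\psi)$.

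The main obstacle is not computational; it is conceptual bookkeeping of the three different ``averaging'' objects (the raw $Y_{ij}(\mathbf{d}_i,\mathbf{z}_i)$, the $\mathbf{Z}$-averaged $\overline{Y}_{ij}(D_{ij}(z),z,\phi)$ of Section \ref{sec:avg}, and the hybrid $\overline{Y}_{ij}(d,D_{i(j)},\phi)$ of Section \ref{sec:localAvg}) and verifying that the three assumptions line up correctly to convert one into the other. In particular, it is crucial that one-sided compliance is invoked only at the single coordinate $j$ (to force $D_{ij}(0)=0$), while the remaining $n_i-1$ coordinates retain their natural treatments $\mathbf{D}_{i(j)}(\mathbf{z}_{i(j)})$; this is precisely why Theorem \ref{thm:3} identifies a peer effect at $d=0$ rather than at $d=1$, and why the analogous statement for $d=1$ does not follow from these assumptions alone.
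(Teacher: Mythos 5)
Your proposal is correct and follows essentially the same route as the paper: both reduce to the individual-level decomposition of $\overline{Y}_{ij}(D_{ij}(0),0,\phi)$ via the exclusion restriction and personalized encouragement, invoke one-sided compliance to set $D_{ij}(0)=0$ at the $j$th coordinate only, and then average over blocks and the population. The only cosmetic difference is that you establish the identity $\overline{Y}_{ij}(D_{ij}(0),0,\phi)=\overline{Y}_{ij}(0,D_{i(j)},\phi)$ separately for each measure and subtract at the end, whereas the paper manipulates the $\phi-\psi$ difference directly by recycling the algebra from the proof of Theorem \ref{thm:2}; both are valid.
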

Theorem \ref{thm:3} states that under one-sided compliance, the peer intent-to-treat effect is equal to the local peer treatment effect for everyone. This is because those who are assigned an encouragement value of $Z_{ij} = 0$ can never receive the treatment and thus, the treatment value is fixed at $D_{ij} = 0$. This allows us to identify the peer treatment effect, which requires the treatment value to be fixed. We remark that if the one-sided compliance is designed such that $D_{ij}(1) = 1$, then by using the same argument as Theorem \ref{thm:3}, we obtain $PITT(1,\phi,\psi) = LPT(1,\phi,\psi)$. Also, as mentioned before, unlike the usual encouragement design, assuming monotonicity or one-sided compliance leads to different identification.

The identification results from Theorems \ref{thm:1} to \ref{thm:3} highlight many ways one can obtain some evidence of direct or peer treatment effects from ITT effects. While the peer encouragement design is not a perfect solution in practice in the sense that we can not recover the original direct or peer treatment effect and instead, we have to settle with local versions of said effects, the results provide familiar and interpretive estimands of the treatment effect in settings where the treatment cannot be randomly assigned. From the perspective of non-compliance, the analysis of the peer encouragement design shows that non-compliance in network settings can introduce new complexities, such as the identification of local difference in peer effects and different identifications under different versions of monotonicity, while also generalizing familiar estimand in the IV literature, such as the generalization of the complier average treatment effect to the direct complier average treatment effect.

We briefly remark on estimation and inference of the $LDT$s and $LPT$s. For point-estimation, a simple strategy for obtaining estimates of $LDT$s and $LPTs$ is to use the plug-in approach where we replace population estimands of the intent-to-treat estimands in equations \eqref{eq:idLDT}-\eqref{eq:idonesidedLPT} with the sample versions outlined in Section \ref{sec:ITT}. Unfortunately, for inference like confidence intervals, interference makes this problem difficult and one may have to (i) make stratified interference assumption \citep{hudgens_toward_2008} about the intent-to-treat effects, (ii) use bounds if the outcome is binary \citep{tchetgen_tchetgen_causal_2012}, or (iii) resort to asymptotic approximations \citep{liu_large_2014} to make progress. However, once these simplifying assumptions are made, one can use the delta method to obtain a first-order approximation of confidence intervals and standard errors; see Chapter 23 of \citet{imbens_causal_2015} on using the delta method for instrumental variables with ITT estimates. Overall, the key to obtaining inference for the local causal network estimands rely on obtaining accurate inference for the ITT effects in the presence of interference and we leave it as a topic of future research.

\section{Discussion}
In this paper, we present an experimental design, the peer encouragement design, to study network effects when randomizing the treatment is infeasible and treatment compliance cannot be enforced. The new design is motivated by approaches in causal inference with partial interference as well as encouragement designs popular in instrumental variables analysis. However, the new design deviates from the prior literature by introducing the notion of personalized encouragement and local estimands in network settings, both of which work to reduce the exponential heterogeneity that is present in network settings with non-compliance. The peer encouragement design serves not only for future empirical work to study treatment efficacy whenever treatment randomization is infeasible, but also to study the effect of non-compliance when randomized experiments are conducted on networks.

\bigskip
\begin{center}
{\large\bf SUPPLEMENTARY MATERIAL}
\end{center}

\begin{description}

\item[Supplementary Materials for Peer Encouragement Design] Supplementary materials contain additional details of the experimental design. (.pdf file)

\end{description}

\section{Appendix}
\begin{proof}[Proof of Theorem \ref{thm:1}] By the exclusion restriction and personalized encouragement, we have
\begin{align*}
&\overline{Y}_{ij}(D_{ij}(1),1,\phi) - \overline{Y}_{ij}(D_{ij}(0),0,\phi) \\
=& \overline{Y}_{ij}(D_{ij}(1),\phi) - \overline{Y}_{ij}(D_{ij}(0),\phi) \\
=& \sum_{\mathbf{z}_{i(j)} \in \mathcal{Z}_{n_i - 1}} \left(Y_{ij}(D_{ij}(1) = 1,\mathbf{D}_{i(j)}(\mathbf{z}_{i(j)})) D_{ij}(1) + Y_{ij}(D_{ij}(1) = 0,\mathbf{D}_{i(j)}(\mathbf{z}_{i(j)})) (1 - D_{ij}(1)) \right) P_{\phi}(\mathbf{Z}_{i(j)} = \mathbf{z}_{i(j)}) \\
\quad{}&- \left(Y_{ij}(D_{ij}(0) = 1,\mathbf{D}_{i(j)}(\mathbf{z}_{i(j)})) D_{ij}(0) + Y_{ij}(D_{ij}(0) = 0,\mathbf{D}_{i(j)}(\mathbf{z}_{i(j)})) (1 - D_{ij}(0)) \right) P_{\phi}(\mathbf{Z}_{i(j)} = \mathbf{z}_{i(j)}) \\
=& (D_{ij}(1) - D_{ij}(0)) \sum_{\mathbf{z}_{i(j)} \in \mathcal{Z}_{n_i - 1}} \left(Y_{ij}(D_{ij}(1) = 1,\mathbf{D}_{i(j)}(\mathbf{z}_{i(j)})) - Y_{ij}(D_{ij}(0) = 0,\mathbf{D}_{i(j)}(\mathbf{z}_{i(j)}))\right)P_{\phi}(\mathbf{Z}_{i(j)} = \mathbf{z}_{i(j)}) \\
=&(D_{ij}(1) - D_{ij}(0)) (\overline{Y}_{ij}(1,D_{i(j)},\phi) - \overline{Y}_{ij}(1,D_{i(j)},\phi))
\end{align*}
Then, under monotonicity, we can take the blockwise average to obtain
\begin{align*}
&DITT_{i}(1,0,\phi) \\
=& \frac{1}{n_i} \sum_{j=1}^{n_i}\overline{Y}_{ij}(D_{ij}(1),1,\phi) - \overline{Y}_{ij}(D_{ij}(0),0,\phi)  \\
=& \frac{1}{n_i} \sum_{j=1}^{n_i} (D_{ij}(1) - D_{ij}(0)) (\overline{Y}_{ij}(1,D_{i(j)},\phi) - \overline{Y}_{ij}(0,D_{i(j)},\phi)) \\
=&\left( \frac{\sum_{j=1}^{n_i} I(D_{ij}(1) = 1, D_{ij}(0) = 0)}{n_i} \right) \sum_{j: D_{ij}(1) = 1, D_{ij}(0) = 0} \overline{Y}_{ij}(1,D_{i(j)},\phi) - \overline{Y}_{ij}(0,D_{i(j)},\phi)) \\
=& \left( \frac{\sum_{j=1}^{n_i} I(D_{ij}(1) = 1, D_{ij}(0) = 0)}{n_i} \right) LDT_i(1,0,\phi,Co)
\end{align*}
Diving the above quantity by $ET_{i}(1,0)$ and summing across the blocks will give you identification of $LDT(1,0,\phi,Co)$.
\end{proof}

\begin{proof}[Proof of Theorem \ref{thm:2}]
By the exclusion restriction and personalized encouragement, we have
\begin{align*}
&\overline{Y}_{ij}(D_{ij}(1),1,\phi) - \overline{Y}_{ij}(D_{ij}(1),1,\psi) \\
=& \overline{Y}_{ij}(D_{ij}(1),\phi) - \overline{Y}_{ij}(D_{ij}(1),\psi) \\
=& \sum_{\mathbf{z}_{i(j)} \in \mathcal{Z}_{n_i - 1}} \left(Y_{ij}(D_{ij}(1) = 1,D_{i(j)}(z_{i(j)})) D_{ij}(1) + Y_{ij}(D_{ij}(1) = 0,D_{i(j)}(z_{i(j)})) (1 - D_{ij}(1)) \right) P_{\phi}(\mathbf{Z}_{i(j)} = \mathbf{z}_{i(j)}) \\
\quad{}&- \left(Y_{ij}(D_{ij}(1) = 1,D_{i(j)}(z_{i(j)})) D_{ij}(1) + Y_{ij}(D_{ij}(0) = 0,D_{i(j)}(z_{i(j)})) (1 - D_{ij}(1)) \right) P_{\psi}(\mathbf{Z}_{i(j)} = \mathbf{z}_{i(j)}) \\
=& D_{ij}(1) ( \overline{Y}_{ij}(1,D_{i(j)},\phi) - \overline{Y}_{ij}(1,D_{i(j)},\psi)) + (1 - D_{ij}(1)) ( \overline{Y}_{ij}(0,D_{i(j)},\phi) - \overline{Y}_{ij}(0,D_{i(j)},\psi))
\end{align*}
A similar algebraic manipulation also leads to 
\begin{align*}
&\overline{Y}_{ij}(D_{ij}(0),0,\phi) - \overline{Y}_{ij}(D_{ij}(0),0,\psi) \\
=& D_{ij}(0) ( \overline{Y}_{ij}(1,D_{i(j)},\phi) - \overline{Y}_{ij}(1,D_{i(j)},\psi)) + (1 - D_{ij}(0)) ( \overline{Y}_{ij}(0,D_{i(j)},\phi) - \overline{Y}_{ij}(0,D_{i(j)},\psi))
\end{align*}
Then, under monotonicity, we can take the block average to obtain 
\begin{align*}
&PITT_{i}(1,\phi,\psi) - PITT_i(0,\phi,\psi) \\
=& \frac{1}{n_i} \sum_{j=1}^{n_i} \overline{Y}_{ij}(D_{ij}(1),\phi) - \overline{Y}_{ij}(D_{ij}(1),\psi) - \frac{1}{n_i} \sum_{j=1}^{n_i} \overline{Y}_{ij}(D_{ij}(0),\phi) - \overline{Y}_{ij}(D_{ij}(0),\psi) \\
=& \frac{1}{n_i} \sum_{j=1}^{n_i} D_{ij}(1) ( \overline{Y}_{ij}(1,D_{i(j)},\phi) - \overline{Y}_{ij}(1,D_{i(j)},\psi)) + (1 - D_{ij}(1)) ( \overline{Y}_{ij}(0,D_{i(j)},\phi) - \overline{Y}_{ij}(0,D_{i(j)},\psi)) \\
&\quad{} - \frac{1}{n_i} \sum_{j=1}^{n_i}D_{ij}(0) ( \overline{Y}_{ij}(1,D_{i(j)},\phi) - \overline{Y}_{ij}(1,D_{i(j)},\psi)) + (1 - D_{ij}(0)) ( \overline{Y}_{ij}(0,D_{i(j)},\phi) - \overline{Y}_{ij}(0,D_{i(j)},\psi)) \\
=& \frac{1}{n_i} \sum_{j=1}^{n_i}  (D_{ij}(1) - D_{ij}(0)) ( \overline{Y}_{ij}(1,D_{i(j)},\phi) - \overline{Y}_{ij}(1,D_{i(j)},\psi)) \\
\quad{}&- \frac{1}{n_i} \sum_{j=1}^{n_i}  (D_{ij}(1) - D_{ij}(0)) ( \overline{Y}_{ij}(0,D_{i(j)},\phi) - \overline{Y}_{ij}(0,D_{i(j)},\psi)) \\
=& \left(\frac{\sum_{j=1}^{n_i} I(D_{ij}(1) = 1,D_{ij}(0) = 0)}{n_i} \right) LPT_i(1,\phi,\psi,Co)  \\
\quad{}& -\left(\frac{\sum_{j=1}^{n_i} I(D_{ij}(1) = 1,D_{ij}(0) = 0)}{n_i} \right) LPT_i(0,\phi,\psi,Co)
\end{align*}
Diving the above quantity by $ET_i(1,0)$ and summing across the blocks will give you the identification of $LPT(1,\phi,\psi,Co) - LPT(0,\phi,\psi,Co)$.
\end{proof}

\begin{proof}[Proof of Theorem \ref{thm:3}]
From the proof from Theorem \ref{thm:2}, we have
\begin{align*}
&\overline{Y}_{ij}(D_{ij}(0),0,\phi) - \overline{Y}_{ij}(D_{ij}(0),0,\psi) \\
=& D_{ij}(0) ( \overline{Y}_{ij}(1,D_{i(j)},\phi) - \overline{Y}_{ij}(1,D_{i(j)},\psi)) + (1 - D_{ij}(0)) ( \overline{Y}_{ij}(0,D_{i(j)},\phi) - \overline{Y}_{ij}(0,D_{i(j)},\psi)) 
\end{align*}
By one-sided compliance, we have
\[
\overline{Y}_{ij}(D_{ij}(0),0,\phi) - \overline{Y}_{ij}(D_{ij}(0),0,\psi) = ( \overline{Y}_{ij}(0,D_{i(j)},\phi) - \overline{Y}_{ij}(0,D_{i(j)},\psi)) 
\]
Then, taking the block average gives us
\[
PITT_{i}(0,\phi,\psi) = \frac{1}{n_i} \sum_{j=1}^{n_i} ( \overline{Y}_{ij}(0,D_{i(j)},\phi) - \overline{Y}_{ij}(0,D_{i(j)},\psi)) = \frac{1}{n_i} LPT(0,\phi,\psi)
\]
\end{proof}

\bibliographystyle{chicago}
\bibliography{network}
\end{document}